\documentclass[aps,prx,twocolumn,amsmath,amssymb,nofootinbib,superscriptaddress]{revtex4-2}
\usepackage{times}
\usepackage{braket}
\usepackage{amsmath}
\usepackage{booktabs}
\usepackage{tabularx}

\newtheorem{proposition}{Proposition}
\usepackage{mathtools} 
\usepackage[colorlinks,linkcolor=blue,anchorcolor=blue,citecolor=blue]{hyperref}
\usepackage{multirow}
\usepackage{makecell}

\begin{document}
\title{Routing Codes: High-Rate Quantum LDPC Codes with Short, Parallel Non-Local Connectivity}

\author{Jiaxuan Zhang}
\email{zhangjx1@iai.ustc.edu.cn}
\affiliation{Institute of Artificial Intelligence, Hefei Comprehensive National Science Center, Hefei, Anhui, 230088, China}

\author{Zhao-Yun Chen}
\email{chenzhaoyun@iai.ustc.edu.cn}
\affiliation{Institute of Artificial Intelligence, Hefei Comprehensive National Science Center, Hefei, Anhui, 230088, China}

\author{Peng Duan}
\affiliation{Laboratory of Quantum Information, University of Science and Technology of China, Hefei, 230026, China}
\affiliation{Anhui Province Key Laboratory of Quantum Network, University of Science and Technology of China, Hefei, 230026, China}

\author{Jia-Ning Li}
\affiliation{Laboratory of Quantum Information, University of Science and Technology of China, Hefei, 230026, China}
\affiliation{Anhui Province Key Laboratory of Quantum Network, University of Science and Technology of China, Hefei, 230026, China}

\author{Tian-Hao Wei}
\affiliation{Laboratory of Quantum Information, University of Science and Technology of China, Hefei, 230026, China}
\affiliation{Anhui Province Key Laboratory of Quantum Network, University of Science and Technology of China, Hefei, 230026, China}

\author{Qing-Yang Hou}
\affiliation{Laboratory of Quantum Information, University of Science and Technology of China, Hefei, 230026, China}
\affiliation{Anhui Province Key Laboratory of Quantum Network, University of Science and Technology of China, Hefei, 230026, China}

\author{Wei-Cheng Kong}
\affiliation{Origin Quantum Computing,  Hefei, China}

\author{Yu-Chun Wu}
\email{wuyuchun@ustc.edu.cn}
\affiliation{Laboratory of Quantum Information, University of Science and Technology of China, Hefei, 230026, China}
\affiliation{Anhui Province Key Laboratory of Quantum Network, University of Science and Technology of China, Hefei, 230026, China}
\affiliation{Institute of Artificial Intelligence, Hefei Comprehensive National Science Center, Hefei, Anhui, 230088, China}

\author{Guo-Ping Guo}
\affiliation{Laboratory of Quantum Information, University of Science and Technology of China, Hefei, 230026, China}
\affiliation{Anhui Province Key Laboratory of Quantum Network, University of Science and Technology of China, Hefei, 230026, China}
\affiliation{Institute of Artificial Intelligence, Hefei Comprehensive National Science Center, Hefei, Anhui, 230088, China}
\affiliation{Origin Quantum Computing,  Hefei, China}

\date{\today}

\begin{abstract}
Quantum low-density parity-check (qLDPC) codes are promising candidates for realizing large-scale fault-tolerant quantum computing. Although many codes with favorable theoretical parameters have been developed, their practical adoption must take hardware implementability into account. For mainstream quantum platforms such as superconductors and neutral atoms, the connectivity, the length of non-local couplings, and the complexity of wiring or atom rearrangement are key factors that dictate the difficulty of hardware realization. Here, we propose a new family of qLDPC codes, termed routing codes. Within this family, we find explicit instances whose encoding rates are competitive to those of bivariate bicycle (BB) codes, while systematically reducing qubit connectivity, shortening the length of non-local couplings, and, crucially, making all non-local couplings mutually parallel. This parallelism translates into quantifiable benefits, substantially reducing wiring crossings in superconducting multi-layer architectures and simplifying the scheduling of atom movement in neutral-atom arrays. The weight-7 routing codes maintain a high threshold of 0.7\% and reduce the physical qubit overhead by approximately a factor of 8, compared to surface codes achieving a same logical error rate. Further increasing the weight to 11 yields a [[200, 24, 14]] and [[200, 16, 17]] codes with an encoding rate over 23 times that of the surface code, at the cost of a higher logical error rate. These results establish routing codes as a hardware-centric qLDPC family that bridges the gap between theoretical optimality and near-term physical feasibility. 

\end{abstract}
\maketitle

\section{Introduction}
Quantum computing promises transformative capabilities for classically intractable problems~\cite{365700,freedman2002simulation}, yet its physical realization is fundamentally limited by errors. Quantum error correction (QEC) provides a pathway to fault-tolerant computation by redundantly encoding logical qubits into a larger number of physical qubits~\cite{preskill1998reliable,gottesman1997stabilizer}. For over a decade, the surface code~\cite{PhysRevA.86.032324} has been the predominant quantum error-correcting architecture, primarily due to its high threshold and hardware-friendly nearest-neighbor geometry on a two-dimensional lattice. However, the most critical drawback of surface code is an extremely low encoding rate, which imposes a prohibitive qubit overhead for the practical large-scale computation. 

This limitation has recently spurred intense interest in quantum low-density parity-check (qLDPC) codes~\cite{PRXQuantum.2.040101}, which offer much higher encoding rates. Achieving higher encoding rates than local codes, however, requires long-range connectivity. Rigorous results show that any qLDPC code surpassing the surface code in both rate and distance must incorporate a growing amount of non-local couplings~\cite{Baspin2022connectivity,PhysRevLett.129.050505}. Consequently, a fundamental reality in the field is that the majority of theoretical studies on qLDPC codes to date assume an idealized, all-to-all connected hardware.

However, such all-to-all connectivity comes at a cost on any quantum platform. For neutral atoms and trapped ions, atom rearrangement or ion shuttling introduces decoherence, heating, and qubit loss that scale directly with the length of non-local couplings~\cite{Bluvstein2022,Bruzewicz2019}. On superconducting platforms, the challenges are even more severe. Higher connectivity demands complex wiring that exacerbates crosstalk, frequency crowding, and leakage errors~\cite{PhysRevApplied.10.054062, kosen2024signal}, while non-parallel long-range couplings cause wiring trajectories to cross. Resolving these crossings with multi-layer architectures forces detours that can inflate the actual wiring length by several times, introducing parasitic capacitance and reducing fabrication yield~\cite{Rosenberg2017}.

A central challenge, therefore, is to design qLDPC codes that preserve high encoding rates while fundamentally simplifying their hardware requirements. A landmark advance in this direction is the bivariate bicycle (BB) code~\cite{bravyi2024high}, which achieves high encoding rates and high threshold with a thickness-2 structure. However, the standard BB code such as the gross code requires degree-6 qubit connectivity and non-local coupling vectors of $(6,3)$ and $(3,6)$, whose crossing nature leads to severe wiring congestion in superconducting layouts and complex atom-movement scheduling in neutral-atom arrays. Although several circuit optimization techniques~\cite{PhysRevLett.134.090602,Zhou2026} have been proposed that moderately lower the required connectivity, the underlying pattern of non-local coupling vectors and the wiring crossings they cause remains unchanged. In contrast, a novel direction is to design new qLDPC codes that preserve the encoding rate advantages of BB codes while fundamentally simplifying the underlying connectivity structure.

In this work, we propose \textit{routing codes}, a family of quantum error-correcting codes constructed by qubit routing techniques~\cite{c31l-qjv2,Zhou2026} based on iSWAP gates. The recently introduced directional code~\cite{https://doi.org/10.48550/arxiv.2507.19430} can be viewed as a special case of routing codes, which achieves a higher encoding rate than the surface code while maintaining only nearest-neighbor interactions on a two-dimensional torus. This demonstrates the potential of qubit routing techniques in increasing non-locality, which is crucial for constructing high-rate QLDPC codes. Inspired by this, we go a step further by allowing non-local couplings and relaxing the constraint of identical routing paths. This generalization enables us to find instances with encoding rates approaching those of BB codes.

We first establish a general framework for routing codes, which is defined on more general permutation maps. We further point out that for code constructions where the routing paths of X and Z syndrome qubits satisfy specific symmetries, the commutation condition can be naturally satisfied. In practice, we observe that time-reversal symmetry of the routing sequences yields better performing codes than the identical condition approach previously studied in Ref.~\cite{https://doi.org/10.48550/arxiv.2507.19430}. Guided by this observation, we search for routing codes on a standard torus. Importantly, all these constructions satisfy a key hardware constraint where each code requires only a single non-local coupling vector alongside nearest-neighbor couplings—which fundamentally simplifies the connectivity layout. All identified instances use a vector shorter than $(6,3)$, require a connectivity of only 4 or 5 per qubit. 

For weight-7 routing codes, we already achieve encoding rates competitive with BB codes. Under circuit-level noise, our numerical simulations reveal that weight-7 routing codes outperform surface codes of the same circuit-level distance, yielding a threshold of 0.7\% and lower error rate per logical qubit. Routing codes reduce the physical qubit overhead by approximately a factor of 8 compared to rotated surface codes to reach a similar logical error rate. By further increasing the stabilizer weight up to 11, we find explicit instances such as [[200, 24, 14]] and [[200, 16, 17]] whose encoding rates reach up to over 23 times that of the surface code, at the cost of an increase in the logical error rates. Compared to the directional tile codes in Ref.~\cite{gu2026nearestneighbourgatesneedhighrate}, the weight-7 routing codes already achieve slightly higher encoding rates than weight-11 directional tile codes at the same code distance. When the routing code weight is likewise increased to 11, the parameters $kn / d^2$ can reach over 2.4 times that of directional tile codes.

Crucially, the near-term quantum hardware is rapidly approaching the level of maturity required to implement routing codes. On superconducting processors, high-fidelity non-local two-qubit gates have been demonstrated~\cite{xx24-r7q6,PhysRevLett.134.020801,marxer2023long,wang2025demonstration,Zhang2025,li2025fastrobustremotetwoqubit}, and multi-tier architectures with superconducting bump bonds~\cite{rosenberg20173d,field2024modular,kosen2024signal,norris2025performance} and through-silicon vias (TSVs)~\cite{yost2020solid,mallek2021fabrication,hazard2023characterization} have been separately realized. On neutral-atom platforms, large-scale transport of qubit arrays with acousto-optic deflectors (AOD) is now routine~\cite{Bluvstein2022,bluvstein2024logical,Bluvstein2025}. Moreover, high-fidelity iSWAP gates, the native entangling operation in our construction, have been demonstrated across superconducting, trapped-ion and neutral-atom platforms with fidelities comparable to the widely used CZ gate~\cite{PhysRevX.11.021058,Picard2024,ildefonso2025expandingneutralatomgate}. Together, these advances provide the foundation for routing non-local couplers across multiple tiers and for executing high-fidelity iSWAP-based stabilizer measurements.

Routing codes directly exploit these hardware capabilities through their short, parallel non-local connectivity, enabling a quantitatively simpler hardware layout. On a multi-layer superconducting processor, the parallelism substantially reduces wiring crossings, resulting in a layout with significantly fewer routing tiers and a markedly lower hardware complexity score compared to BB codes of comparable distance. On neutral-atom hardware, the $(2,1)$ coupling vector is three times shorter than $(6,3)$, which proportionally reduces the shuttle distance, while its parallel structure removes row and column conflicts during AOD-based movement. These quantitative advantages strongly suggest that routing codes are a promising candidate for near-term hardware implementation. We therefore believe that routing codes represent a significant step toward bridging the persistent gap between the theoretical optimality and the practical realizability of qLDPC codes.

The remainder of this paper is organized as follows. Section~\ref{sec: construction} presents the general construction of routing codes and demonstrate the special role that the identity and time-reversal symmetry of routing paths play in the code construction. Section~\ref{sec: implementation} describes how routing codes are implemented by qubit routing using iSWAP gates. Section~\ref{sec: instances} provides attrctive code examples and analyzes their circuit-level performance. Section~\ref{sec: hardware} quantitatively analyzes the hardware implementation advantages of routing codes, and Section~\ref{sec: discussion} concludes with a discussion and outlook.

\section{construction} 
\label{sec: construction} 
We now present the formal construction of routing codes. The code can be defined on a standard torus. Let the set of lattice sites be \(L = \mathbb{Z}_l \times \mathbb{Z}_m\) with both \(l, m\) even, and its associated group ring is
\begin{equation}
R = \mathbb{F}_2[x,y]/(x^l-1,\;y^m-1),
\end{equation}
where a site \((i,j)\in L\) corresponds to the monomial \(x^i y^j\). We assign labels to qubits at their initial positions to specify their roles in the error correction. The lattice is partitioned into data qubits \(D = \{(i,j) \mid i+j \text{ is even}\}\), X-syndrome qubits \(X = \{(i,j) \mid i \text{ is odd}, j \text{ is even}\}\), and Z-syndrome qubits \(Z = \{(i,j) \mid i \text{ is even}, j \text{ is odd}\}\).

Next, we define the qubit routing over $T$ time steps. Let \(\{\mathbf{v}_t\}_{t=1}^{T}\) and \(\{\mathbf{w}_t\}_{t=1}^{T}\) be sequences of routing vectors in $L$ for the X- and Z-syndrome qubits, respectively. At step $t$, every X-syndrome qubit at position $\mathbf{r}$ is swapped with the data qubit at $\mathbf{r}+\mathbf{v}_t$, and every Z-syndrome qubit at $\mathbf{r}$ is swapped with the data qubit at $\mathbf{r}+\mathbf{w}_t$. We implicitly assume that the routing vectors are chosen such that the target of an X (or Z) qubit is always a D qubit, and the targets of X and Z routings never conflict.

The routing process at step \(t\) can be concisely described by a permutation map \(\sigma_t\) defined as follows:
\begin{equation}
\sigma_t(\mathbf{u}) =
\begin{cases}
\mathbf{u} + \mathbf{v}_t, & \mathbf{u} \in X,\\
\mathbf{u} + \mathbf{w}_t, & \mathbf{u} \in Z,\\
\mathbf{u} - \mathbf{v}_t, & \mathbf{u} \in D \text{ and } \mathbf{u} - \mathbf{v}_t \in X,\\
\mathbf{u} - \mathbf{w}_t, & \mathbf{u} \in D \text{ and } \mathbf{u} - \mathbf{w}_t \in Z.
\end{cases}
\end{equation}
The four cases partition $L$, so $\sigma_t$ is well defined. Moreover, $\sigma_t^{-1} = \sigma_t$, i.e., each step is an involution.

For each X- or Z-syndrome qubit, the set of data qubits that are successively routed to it over the \(T\) time steps defines an X- or Z-type stabilizer operator. We label these data qubits by their initial positions on the torus. Let an X-syndrome qubit start at $\mathbf{x}$. Its position evolves as $\mathbf{x}^{(t)} = \sigma_t(\mathbf{x}^{(t-1)})$ with $\mathbf{x}^{(0)} = \mathbf{x}$. The initial position of the $t$-th data qubit routed to it is
\begin{equation}
\begin{aligned}
\mathbf{q}_{\mathbf{x},t} 
&= \sigma_1^{-1} \circ \sigma_2^{-1} \circ \cdots \circ \sigma_{t-1}^{-1}(\mathbf{x}^{(t)})  \\
&= (\sigma_1 \circ \sigma_2 \circ \cdots \circ \sigma_{t-1} \circ \sigma_t \circ \sigma_{t-1} \circ \cdots \circ \sigma_1)(\mathbf{x}) \\
&= S_{t-1}^{-1}\, \sigma_t\, S_{t-1}(\mathbf{x}),
\end{aligned}
\end{equation}
where $S_t = \sigma_1 \circ \sigma_2 \circ \cdots \circ \sigma_t$ and we used $\sigma_i^{-1} = \sigma_i$. The stabilizer associated with this X-syndrome qubit is then fully specified by the set of initial data-qubit positions
\begin{equation}
Q_{\mathbf{x}} = \{\, \mathbf{q}_{\mathbf{x},1},\; \mathbf{q}_{\mathbf{x},2},\; \dots,\; \mathbf{q}_{\mathbf{x},T} \,\} \subset L,
\end{equation}
and analogously for a Z-syndrome qubit with initial position \(\mathbf{z}\), we obtain a set \(Q_{\mathbf{z}} = \{\mathbf{q}_{\mathbf{z},1},\dots,\mathbf{q}_{\mathbf{z},T}\}\).

The initial label configuration is invariant under even--even translations. The code is therefore determined by one representative X stabilizer and one representative Z stabilizer. Fixing reference positions $\mathbf{x}$ and $\mathbf{z}$, the entire code is described by the two sets $Q_{\mathbf{x}}$ and $Q_{\mathbf{z}}$. Mapping each site $(i,j)$ to the monomial $x^i y^j$ gives the polynomial representation
\begin{equation}
P(x,y) = \sum_{(a,b) \in Q_{\mathbf{x}}} x^{a} y^{b}, \quad
Q(x,y) = \sum_{(c,d) \in Q_{\mathbf{z}}} x^{c} y^{d},
\end{equation}
in the group ring $R$. All other X and Z stabilizers are then generated by even–even translations of \(P(x,y)\) and \(Q(x,y)\), respectively. The following proposition gives the commutativity conditions for the stabilizers defined as such.

\begin{proposition}
\label{prop:commutativity}
The stabilizer operators defined by above polynomials satisfy commutativity condition if and only if, in the expansion of \(P(x,y)\,\cdot Q(x^{-1},y^{-1})\), all terms \(x^i y^j\) with both \(i\) and \(j\) are even exponent have an even coefficient. 
\end{proposition}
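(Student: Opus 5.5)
The plan is to reduce commutativity of all pairs of stabilizers to a counting statement in the group ring $R$. An X-type and a Z-type Pauli operator commute exactly when their supports overlap in an even number of qubits. An X stabilizer and a Z stabilizer always commute trivially among their own type, so only mixed pairs matter. By the translation invariance noted above, every X stabilizer has support $\mathbf{a}+Q_{\mathbf{x}}$ with $\mathbf{a}\in 2\mathbb{Z}_l\times 2\mathbb{Z}_m$. Likewise every Z stabilizer has support $\mathbf{b}+Q_{\mathbf{z}}$ with $\mathbf{b}$ even--even. In monomials these are $x^{a_1}y^{a_2}P$ and $x^{b_1}y^{b_2}Q$. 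Hence the full commutativity condition is: for every even--even shift $\mathbf{s}=\mathbf{a}-\mathbf{b}$, the overlap $|(\mathbf{s}+Q_{\mathbf{x}})\cap Q_{\mathbf{z}}|$ is even. Only the difference matters, and it ranges over all even--even vectors, because $l,m$ are even and the even--even sublattice is a subgroup.

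Next I would compute the overlap with the standard convolution identity. Write $P=\sum_{\mathbf{p}}P_{\mathbf{p}}x^{\mathbf{p}}$ and $Q=\sum_{\mathbf{q}}Q_{\mathbf{q}}x^{\mathbf{q}}$, with $0/1$ coefficients taken over the integers before reducing mod 2. Then
\begin{equation}
P(x,y)\,Q(x^{-1},y^{-1})=\sum_{\mathbf{u}}\Big(\sum_{\mathbf{p}-\mathbf{q}=\mathbf{u}}P_{\mathbf{p}}Q_{\mathbf{q}}\Big)x^{u_1}y^{u_2},
\end{equation}
and the coefficient of $x^{\mathbf{u}}$ equals $\#\{(\mathbf{p},\mathbf{q})\in Q_{\mathbf{x}}\times Q_{\mathbf{z}}:\mathbf{p}-\mathbf{q}=\mathbf{u}\}=|(Q_{\mathbf{x}}-\mathbf{u})\cap Q_{\mathbf{z}}|$. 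Taking $\mathbf{u}=-\mathbf{s}$, the overlap of the X stabilizer shifted by $\mathbf{s}$ with the reference Z stabilizer is exactly the coefficient of $x^{-s_1}y^{-s_2}$. As $\mathbf{s}$ ranges over even--even vectors, so does $-\mathbf{s}$. Thus the set of pairwise overlap parities coincides with the set of coefficient parities at the even--even monomials, which gives both directions of the iff at once.

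The main point needing care is that $Q_{\mathbf{x}}$ and $Q_{\mathbf{z}}$ are genuine sets (multiplicity one) of data-qubit positions. If some data qubit were routed to the same syndrome qubit twice, the stabilizer would be the mod-2 reduction of the multiset, and the integer coefficient would no longer equal the overlap. I would handle this by observing that the coefficient parity is unaffected by reducing $P$ and $Q$ mod 2 first, since the parity of a convolution coefficient depends only on the mod-2 coefficients. The statement should therefore be read with $P,Q\in R$ over $\mathbb{F}_2$, with ``even coefficient'' meaning the integer lift has even parity, equivalently that the coefficient vanishes in $R$. I would also check briefly that, because stabilizers are indexed by even--even translates of fixed reference points $\mathbf{x}$ and $\mathbf{z}$, no odd shifts ever arise. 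This is why only the even--even exponents, and not all exponents, enter the condition.
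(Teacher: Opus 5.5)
Your proof is correct and follows essentially the paper's argument: you reduce commutativity to even overlap between $Q_{\mathbf{x}}$ and every even--even translate of $Q_{\mathbf{z}}$ (you shift the X side instead, which is equivalent), and you identify each overlap count with the coefficient of the corresponding even--even monomial in $P(x,y)\,Q(x^{-1},y^{-1})$. Your remark about reading $P,Q$ as mod-2 reductions, in case a data qubit is collected twice, is a small refinement the paper leaves implicit; also, your opening sentence should say that X stabilizers commute among themselves and Z stabilizers among themselves, not ``an X stabilizer and a Z stabilizer.''
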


Given the routing vector sequences \(\{v_t\}\) and \(\{w_t\}\), verifying the commutativity condition in Proposition~\ref{prop:commutativity} generally requires first computing \( P(x,y)\) and \(Q(x,y)\). However, the following proposition shows that commutativity is automatically guaranteed when the two sequences are same and time-reversal symmetric. This yields a direct construction of routing codes.

\begin{proposition}
\label{prop:tr-symmetric}
Let \(\{v_t\}_{t=1}^{T}\) be a sequence of routing vectors satisfying
\begin{enumerate}
    \item \(v_t = w_t\) for all \(t = 1,\dots,T\), i.e., X and Z syndrome qubits follow identical routing paths;
    \item \(v_t = v_ {\bar{t}} \coloneqq v_{T-t+1}\) for all \(t\), i.e., the routing sequence is time-reversal symmetric.
\end{enumerate}
Then the stabilizer operators of the resulting routing code automatically satisfy commutativity condition.
\end{proposition}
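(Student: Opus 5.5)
The plan is to reduce the statement to the criterion of Proposition~\ref{prop:commutativity} and to use the two hypotheses one at a time. Hypothesis~1 will relate $Q$ to $P$. Hypothesis~2 will make $P$ point-symmetric. Together they collapse $P(x,y)Q(x^{-1},y^{-1})$ into a square, and in characteristic $2$ a square has only even exponents.

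\emph{Step 1 (identical paths).} Let $\tau$ be translation by $(1,1)$. It maps $D\to D$, $X\to Z$ and $Z\to X$, since $(i,j)\mapsto(i+1,j+1)$ flips both parities and preserves the parity of $i+j$. When $v_t=w_t$, the four cases defining $\sigma_t$ are exchanged in pairs by $\tau$, so $\sigma_t\circ\tau=\tau\circ\sigma_t$ for every $t$. Hence $S_{t-1}^{-1}\sigma_t S_{t-1}$ also commutes with $\tau$. Choosing the reference $\mathbf z=\mathbf x+(1,1)$ then gives $\mathbf q_{\mathbf z,t}=\mathbf q_{\mathbf x,t}+(1,1)$, that is, $Q(x,y)=xy\,P(x,y)$. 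Consequently
\[
P(x,y)\,Q(x^{-1},y^{-1}) = x^{-1}y^{-1}\,P(x,y)\,P(x^{-1},y^{-1}).
\]

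\emph{Step 2 (time reversal).} I will show that $Q_{\mathbf x}$ is invariant under a point reflection $\mathbf u\mapsto 2\mathbf c-\mathbf u$, which is equivalent to $P(x^{-1},y^{-1})=x^{-2c_1}y^{-2c_2}P(x,y)$. Let $\rho(\mathbf u)=2\mathbf c-\mathbf u$ with $\mathbf c$ chosen so that $\rho$ preserves the labels $D$, $X$ and $Z$ (this needs $2\mathbf c$ even--even). For such $\rho$, conjugation gives $\rho\sigma_t\rho=\sigma'_t$, the step with vector $-v_t$ in place of $v_t$.

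The palindrome condition $v_t=v_{\bar t}$ then lets me compare the forward trajectory of $\mathbf x$ with the backward one. The $t$-th data qubit met by $\mathbf x$ should correspond, under $\rho$ and a global translation by the net displacement of $\mathbf x$, to the $\bar t$-th one. Concretely, I will use $S_T=\sigma_1\cdots\sigma_T=\sigma_T\cdots\sigma_1=S_T^{-1}$, which holds because the sequence is a palindrome and each $\sigma_t$ is an involution. From this I aim to derive $\mathbf q_{\mathbf x,\bar t}=S_T\,\mathbf q_{\mathbf x,t}$, and then identify $S_T$ restricted to $Q_{\mathbf x}$ with a point reflection. This is the step I expect to be the main obstacle. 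The difficulty is that the positional labels are static while the qubits move, so the centre $\mathbf c$ and the exact parity of $2\mathbf c$ must be tracked carefully. I would first test the identification on small examples, such as $T=2$ and $T=3$ with nearest-neighbour vectors, before writing the general induction on $t$.

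\emph{Step 3 (conclusion).} Combining the two steps and using the Frobenius identity $P(x,y)^2=P(x^2,y^2)$ over $\mathbb F_2$,
\[
P(x,y)\,Q(x^{-1},y^{-1}) = x^{-1-2c_1}y^{-1-2c_2}\,P(x^2,y^2).
\]
Every monomial of $P(x^2,y^2)$ has even--even exponents. Multiplying by $x^{-1-2c_1}y^{-1-2c_2}$ with $2\mathbf c$ even--even shifts all of them to odd--odd exponents, and the evenness of $l$ and $m$ ensures these parities are well defined in $R$. Therefore every even--even coefficient is zero, which is even, and Proposition~\ref{prop:commutativity} gives commutativity. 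If Step 2 instead produces a reflection centre with $2\mathbf c$ of odd parity, I will recheck Step 1's reference choice, because the two parity shifts must combine to something other than even--even.
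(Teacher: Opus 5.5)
\paragraph{Where the proof breaks.} Your Steps 1 and 3 are sound. In Step 1, translation by $(1,1)$ swaps $X\leftrightarrow Z$ and preserves $D$, so it commutes with every $\sigma_t$ once $v_t=w_t$, which gives $Q=xyP$. In Step 3, a point symmetry of $Q_{\mathbf x}$ with $2\mathbf c$ even--even does turn $P(x,y)Q(x^{-1},y^{-1})$ into a monomial times $P(x^2,y^2)$, which has only odd--odd exponents.

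The gap is Step 2. It carries the entire content of the proposition, and the mechanism you propose for it cannot work. Each $v_t$ has odd coordinate sum, so every $\sigma_t$ moves every site by $\pm v_t$ and exchanges the even-sum sublattice $D$ with the odd-sum sublattice $X\cup Z$. Consequently any composition of the $\sigma_t$, and in particular $S_T$, acts on each sublattice as a single translation, never as a point reflection. For odd $T$ it even sends the data site $\mathbf q_{\mathbf x,t}$ to a syndrome site, so $\mathbf q_{\mathbf x,\bar t}=S_T\mathbf q_{\mathbf x,t}$ is simply false. The identity $S_T=S_T^{-1}$ is correct, but it says nothing about how the individual points $\mathbf q_{\mathbf x,t}$ are arranged.

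\paragraph{What is missing.} You need an explicit formula for $\mathbf q_{\mathbf x,t}$, which is what the paper uses. With $v_t=w_t$, all syndrome qubits sit on one sublattice and all data qubits on the other. So at step $t$ every syndrome qubit moves by $+v_t$ and every data qubit by $-v_t$ (the convention of Sec.~III). At step $t$ the syndrome qubit from $\mathbf x$ meets the data qubit sitting at $\mathbf x+\sum_{\tau\le t}v_\tau$. That data qubit has so far been displaced by $-\sum_{\tau<t}v_\tau$, hence
\[
\mathbf q_{\mathbf x,t}=\mathbf x+2\sum_{\tau<t}v_\tau+v_t .
\]
The palindrome gives $\sum_{\tau<\bar t}v_\tau=\sum_{\tau>t}v_\tau$, so $\mathbf q_{\mathbf x,t}+\mathbf q_{\mathbf x,\bar t}=2\bigl(\mathbf x+\sum_{\tau=1}^{T}v_\tau\bigr)$. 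This is exactly your reflection, with $\mathbf c=\mathbf x+\sum_\tau v_\tau$, and $2\mathbf c$ is automatically even--even. Your Step 3 then finishes the proof.

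That Frobenius finish is a cleaner substitute for the paper's pairing of solutions $(t_1,t_2)$. In the paper's version, the involution that actually works is $(t_1,t_2)\mapsto(\bar t_2,\bar t_1)$, whose fixed points $t_1+t_2=T+1$ are excluded by parity.

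\paragraph{A caution for your small-case tests.} If you apply the static-label $\sigma_t$ of Sec.~II literally, the syndrome's displacement alternates as $(-1)^{t-1}v_t$. The same formula then holds with $v_\tau$ replaced by $(-1)^{\tau-1}v_\tau$, and this sequence inherits the palindrome only for odd $T$. For even $T$ one finds $\mathbf q_{\mathbf x,t}=\mathbf q_{\mathbf x,\bar t}$, so your $T=2$ test, for example, would be degenerate.
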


The two symmetry conditions in Proposition~\ref{prop:tr-symmetric} should be viewed as special construction principles rather than necessary conditions for obtaining a valid routing code. When the X- and Z-syndrome qubits follow identical routing paths and the routing sequence is time-reversal symmetric, the commutation relations follow automatically from Proposition~\ref{prop:tr-symmetric}. Outside this symmetric subclass, the commutativity and temporal-ordering conditions must be verified explicitly for each instance. As will be seen shortly in Section~\ref{sec: instances}, in our search, we independently explored constructions with identical paths and constructions with time-reversal symmetry, as well as instances outside the simultaneous-symmetry subclass. Interestingly, the best-performing instances identified in our search consistently arose from the latter class, suggesting that relaxing the time-reversal symmetry can enlarge the design space and lead to improved code performance. We also note that the routing sequences employed in Refs.~\cite{https://doi.org/10.48550/arxiv.2507.19430,gu2026nearestneighbourgatesneedhighrate} also satisfy both conditions of Proposition~\ref{prop:tr-symmetric}, even though their search was conducted solely based on the identity condition. This convergence suggests that the design principles underlying routing codes are not only natural but also broadly applicable, providing a unified perspective that encompasses both direction codes and the present constructions.

\section{Implementation by qubit routing}
\label{sec: implementation}

The algebraic construction of routing codes admits a direct hardware implementation using qubit routing~\cite{c31l-qjv2,Zhou2026}. The fundamental primitive is the iSWAP gate, which, up to single-qubit gates, is equivalent to a CNOT followed by a SWAP (a CXSWAP operation). Replacing every CNOT in a standard syndrome-extraction circuit with an iSWAP preserves the logical action while swapping the syndrome qubit with the data qubit it interacts with. The permutation $\sigma_t$ is thus directly realized as a physical circuit, and the connectivity requirement is reduced from the full Tanner graph to only the displacement vectors $\{\mathbf{v}_t, \mathbf{w}_t\}$ that appear in the routing sequence.

Concretely, for an X-syndrome qubit initially at position $\mathbf{x}$, the step corresponding to routing vector $\mathbf{v}_t$ is an iSWAP gate between the current syndrome qubit and the data qubit at $\mathbf{x}^{(t-1)}+\mathbf{v}_t$. After the gate, the syndrome qubit moves to $\mathbf{x}^{(t)} = \mathbf{x}^{(t-1)}+\mathbf{v}_t$, and the data qubit moves to the syndrome location. By executing these gates in sequence for $t=1,\dots,T$, the syndrome qubit traverses the path dictated by the routing vectors, collecting parity information from each visited data qubit. The same procedure applies to Z-syndrome qubits using the vectors $\mathbf{w}_t$.

Since the routing permutations $\sigma_t$ are defined such that $X$ and $Z$ syndrome qubits never target the same data qubit at the same time step , all CXSWAP gates within a single time step can be applied in parallel across all syndrome qubits without conflict. Consequently, the full syndrome-extraction cycle consists of exactly $T$ parallel layers of CXSWAP gates, matching the number of routing vectors. 

For the fully parallelized implementation to be valid, an additional temporal condition must be satisfied. Consider the shared data qubits of an X and a Z stabilizer. Among these, the subset for which the X-syndrome qubit acts before the Z-syndrome qubit ($t<s$) must also have even cardinality~\cite{PRXQuantum.5.010348}. In terms of Eq.~\eqref{eq:delta_diff_condition}, for every even--even vector $(2u,2v)$, the number of solutions with $t<s$ is even. As an important special case, the following proposition shows that the temporal condition is automatically guaranteed when the routing vectors are non-negative and the torus is chosen sufficiently large.

\begin{proposition}
\label{prop:parallel_commute}
Consider the time-reversal symmetric construction of Proposition~\ref{prop:tr-symmetric}, with identical X and Z routing vectors $v_t = w_t$ ($t=1,\dots,T$) whose components are non-negative. On a torus $\mathbb{Z}_l \times \mathbb{Z}_m$ sufficiently large such that no path wraps around, i.e.,
\begin{equation}
\sum_{t=1}^{T} v_t < (l, m) \quad \text{(component-wise)},
\end{equation}
the temporal condition for fully parallelized implementation is automatically satisfied.
\end{proposition}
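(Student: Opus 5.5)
The plan is to reduce the temporal condition to a counting statement about a single offset function $f:\{1,\dots,T\}\to\mathbb{Z}^2$, and then use the time-reversal symmetry to build a sign-reversing pairing on the relevant solution sets. Throughout I write $V_t=\sum_{s\le t}v_s$ and $V_0=0$.

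\textbf{Step 1: one offset function for both stabilizer types.} Since $v_t=w_t$ and the components are non-negative, every syndrome qubit makes the same monotone walk $\mathbf{x}\mapsto\mathbf{x}+V_t$. The hypothesis $\sum_t v_t<(l,m)$ means this walk never wraps around the torus. I would use it to lift all positions to $\mathbb{Z}^2$, so that equalities of positions become genuine vector identities rather than congruences.

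I would then show that $\mathbf{q}_{\mathbf{x},t}=\mathbf{x}+f(t)$ for an offset $f(t)$ that does not depend on $\mathbf{x}$. This follows by induction on $t$ from $\mathbf{q}_{\mathbf{x},t}=S_{t-1}^{-1}\sigma_tS_{t-1}(\mathbf{x})$, using that every data qubit visited is displaced only by the syndrome qubits passing through it, and these all move identically.

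Next, the translation by $(1,1)$ maps $X\to Z$ and $D\to D$, and it commutes with every $\sigma_t$ because the paths are identical. Hence $\mathbf{q}_{\mathbf{z},t}=\mathbf{z}+f(t)$ with the \emph{same} $f$ and the same time labels. For a pair consisting of an X-check at $\mathbf{x}$ and a Z-check at $\mathbf{z}$, set $\delta=\mathbf{z}-\mathbf{x}$. The shared data qubits with their time labels are then exactly the solutions $(t,s)$ of
\[
f(t)-f(s)=\delta,
\]
and the temporal condition asks that $N_<(\delta)=\#\{(t,s): f(t)-f(s)=\delta,\ t<s\}$ be even. Note that $t=s$ is impossible because $\delta\neq 0$.

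\textbf{Step 2: what time reversal gives.} Using $v_{\bar t}=v_t$, I expect to prove the reflection identity
\[
f(\bar t)=C-f(t)
\]
for a fixed vector $C$, essentially $V_T$ up to a parity shift. The route is to compute $f(t)$ explicitly in the non-wrapping picture as $V_{t}$ minus the accumulated backward displacement of the visited data qubit. I would then check that this expression is antisymmetric under $t\mapsto\bar t$ about its midpoint. This is the step I expect to be the main obstacle: the exact form of $f$ depends on how many syndrome qubits sweep across a given data qubit, and the non-negativity of the $v_t$ is what should make this count monotone and computable.

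\textbf{Step 3: the pairing.} Given the reflection identity, the map $(t,s)\mapsto(\bar s,\bar t)$ sends a solution of $f(t)-f(s)=\delta$ to a solution of
\[
f(\bar s)-f(\bar t)=(C-f(s))-(C-f(t))=\delta,
\]
so it stays in the same solution set. It also preserves $t<s$, since $t<s$ implies $\bar s<\bar t$. It is an involution, so $N_<(\delta)\equiv\#\{\text{fixed points}\}\pmod 2$. The fixed points are the pairs $(t,\bar t)$ with $t<\bar t$ and $f(t)-f(\bar t)=2f(t)-C=\delta$.

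To dispose of these, I would invoke Proposition~\ref{prop:commutativity}, which is guaranteed here by Proposition~\ref{prop:tr-symmetric}: the total number of solutions is even. I would then combine this with the injectivity of $f$, which holds by monotonicity of the non-wrapping path, to show the fixed-point count is even. Injectivity gives at most one $t$ with $2f(t)=C+\delta$. So it remains to rule out a single unpaired fixed point, by comparing it with its mirror $(\bar t,t)$, which lies in the $t>s$ class and solves the equation for $-\delta$, and then using the parity of the total count.

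If this last parity bookkeeping fails, the fallback is to compare $N_<(\delta)$ directly with $N_<(-\delta)$. For that I would use the lattice inversion $\mathbf{u}\mapsto-\mathbf{u}$ composed with a translation, which should exchange the roles of $\delta$ and $-\delta$ and force $N_<(\delta)=N_>(\delta)$. Together with the total being even, this would make $N_<(\delta)$ even.
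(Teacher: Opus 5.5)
Your involution $(t,s)\mapsto(\bar s,\bar t)$ is a sound mechanism and differs from the paper's. As written, though, the argument is incomplete at the one step it depends on, and the constant there is wrong.

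\textbf{The gap in Step~2.} You leave the reflection identity as an expectation, but it is a short computation. Each $v_t$ must have odd coordinate sum for its targets to be data qubits. So every layer exchanges the whole syndrome sublattice with the whole data sublattice, and every data qubit is displaced by exactly $-v_t$ at step $t$. This holds regardless of signs and of how many syndrome qubits ``sweep'' it. Hence $f(t)=V_{t-1}+V_t$, which is the paper's $\delta_t=2\sum_{\tau<t}v_\tau+v_t$. Time reversal gives $V_{\bar t}=V_T-V_{t-1}$, so $f(t)+f(\bar t)=2V_T$. Thus $C=2V_T$, not ``$V_T$ up to a parity shift.''

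\textbf{Fixed points.} With $C=2V_T$ they vanish by parity. A fixed point would need $\delta\equiv 2\bigl(f(t)-V_T\bigr)$, which is even--even, whereas $\delta=\mathbf{z}-\mathbf{x}$ is odd--odd; parities are well defined since $l,m$ are even. So $N_<(\delta)$ is even immediately.

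Your own treatment of the fixed points should be dropped. It rests on lifting to $\mathbb{Z}^2$, which the hypothesis does not justify. The syndrome walk stays inside the torus, but the data offsets reach $V_{T-1}+V_T$; for the $[[70,8,7]]$ instance on $14\times 10$ one has $f(7)=(11,10)$. The shared-qubit equation therefore remains a congruence, and ``at most one $t$ with $2f(t)=C+\delta$'' is unsupported. Your fallback $N_<(\delta)=N_>(\delta)$ is simply false. For $[[54,8,6]]$, $\delta=(1,1)$ has exactly two solutions, $(t,s)=(2,1)$ and $(7,6)$, both with $t>s$.

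\textbf{Comparison with the paper.} The paper uses non-negativity to make $t\mapsto\mathbf{q}_{\mathbf{x},t}$ monotone. It argues that all solutions for a fixed offset then share the same time order, and takes evenness of the total from Propositions~\ref{prop:commutativity} and~\ref{prop:tr-symmetric}.

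Your repaired pairing preserves the order $t<s$ directly and uses only $v_t=w_t$ and time reversal. It therefore needs neither non-negativity nor any bound on the torus. Applied without the restriction $t<s$, it also gives evenness of the total overlap, so you do not need Proposition~\ref{prop:tr-symmetric} as an input.

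The monotonicity route is shorter when its separation step holds. That step needs the differences $f(t)-f(s)$, which range up to $V_{T-1}+V_T-v_1$, not to wrap, and $V_T<(l,m)$ does not ensure this. Take $l=4$, $m=6$ and $v=[(1,0),(0,1),(0,1),(1,0),(0,1),(0,1),(1,0)]$. Then the offset $\delta=(1,1)$ has four solutions with $t>s$ and two with $t<s$. The $t<s$ count is still even, as your pairing predicts.
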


After a complete forward routing cycle, the positions of syndrome and data qubits are generally permuted relative to their initial configuration. To repeat the measurement of the same stabilizers, a backward routing cycle is applied immediately afterwards, using the reversed sequence of vectors $\{-\mathbf{v}_{T}, \dots, -\mathbf{v}_1\}$ and $\{-\mathbf{w}_{T}, \dots, -\mathbf{w}_1\}$, which returns all qubits to their original positions. By alternating forward and backward cycles, one can perform repetitive syndrome extraction, following the standard scheme for dynamical quantum error-correcting codes~\cite{McEwen2023relaxinghardware}.

\section{Explicit instances and circuit-level performance}
\label{sec: instances}

We focus first on the weight-7 subfamily, as it shares the same circuit depth as the weight-6 BB code and thus may achieve a comparable threshold. Table~\ref{tab:selected_codes} presents explicit weight-7 routing code examples with encoding rates competitive with BB codes and good circuit-level performance. To obtain these codes presented, we have searched for codes following the construction of Proposition~\ref{prop:tr-symmetric}, where the two sequences are identical and time-reversal symmetric. We also searched for codes whose sequences satisfy only one of these two conditions, and separately verified the commutativity and circuit feasibility. The examples listed further restrict the sequences to a single non-local coupling vector shorter than \((6,3)\), and the local vectors \((0,1)\) and \((1,0)\). All instances are defined on a standard torus, use non-local vectors shorter than $(6,3)$, and require a connectivity of only 4 or 5 (see Fig.~\ref{fig:combined}a for an example). More importantly, the all-parallel geometry of the non-local couplings eliminates wiring crossings at the structural level.

\begin{table*}[htbp]
\centering
\renewcommand{\arraystretch}{1.15}
\small
\begin{tabular}{c c c c c c c}
\toprule
$\,$\textbf{Code} $\,$&$\,$ \textbf{$[[n,k,d]]$} $\,$&$\,$ $d_{\text{circ}}$ $\,$&$\,$ \textbf{Encoding rate} $\,$&$\,$ \textbf{Connectivity (local + non-local)} $\,$&$\,$ \textbf{Non-local coupling} $\,$&$\,$ \textbf{Circuit depth} $\,$\\
\midrule
\multirow{8}{*}{\shortstack{Routing codes}} 
& $[[54,8,6]]$  & $\leq 6$ & $1 / 6.75$ & ${3+1}$ & ${(3,0)}$  & 7\\
& $[[70,8,7]]$  & $\leq 7$ & $1 / 8.75$ & ${3+1}$ & ${(2,1)}$  & 7\\
& $[[80,8,7]]$  & $\leq 8$ & $1 / 10$ & ${3+2}$ & ${(4,1)}$  & 7\\
& $[[90,8,9]]$  & $\leq 9$ & $1 / 10$ & ${3+2}$ & ${(4,1)}$  & 7\\
& $[[100,8,10]]$ & $\leq 9$ & ${1 / 12.5}$ & ${3+2}$ & ${(2,1)}$  & 7\\
& $[[110,8,11]]$ & $\leq 10$ & $1 / 13.75$& ${3+2}$ & ${(6,1)}$  & 7\\
& $[[140,8,12]]$ & $\leq 11$ & ${1 / 17.5}$ & ${3+1}$ & ${(4,1)}$  & 7\\
& $[[140,8,13]]$ & $\leq 12$ & ${1 / 17.5}$ & ${3+2}$ & ${(6,1)}$  & 7\\

\midrule
\multirow{3}{*}{\shortstack{BB codes~\cite{bravyi2024high}}}
& $[[72,12,6]]$  & $\leq 6$ & $1 / 6$    & $4+2$ & $(6, 3)$, $(3, 6)$  & 7\\
& $[[108,8,10]]$ & $\leq 8$ & $1 / 13.5$ & $4+2$ & $(6, 3)$, $(3, 6)$  & 7\\
& $[[144,12,12]]$ & $\leq 10$ & $1 / 12$   & $4+2$ & $(6, 3)$, $(3, 6)$  & 7\\
\bottomrule
\end{tabular}
\caption{Weight-7 routing code instances with distances $6$ to $13$, compared to BB codes that share the same generating polynomials as the gross code. For each code we list the parameters $[[n,k,d]]$, the estimated upper bound on the circuit-level distance $d_{\text{circ}}$, the encoding rate expressed as $1/x$ (the number of physical qubits per logical qubit), the qubit connectivity (local plus non-local couplers per qubit), the non-local coupling vector, and the two-qubit gate depth of one error-correction cycle. The upper bounds on $d_{\text{circ}}$ are obtained by randomized sampling with over $10^6$ trials. All routing codes use a single non-local vector with a length shorter than $(6,3)$ and require connectivity of only 4 or 5, compared to 6 for the BB codes. The two-qubit gate depth of the error correction cycle of routing codes equals the stabilizer weight of 7.}
\label{tab:selected_codes}
\end{table*}

\begin{figure*}[htpb]
    \centering
    \includegraphics[width=\textwidth]{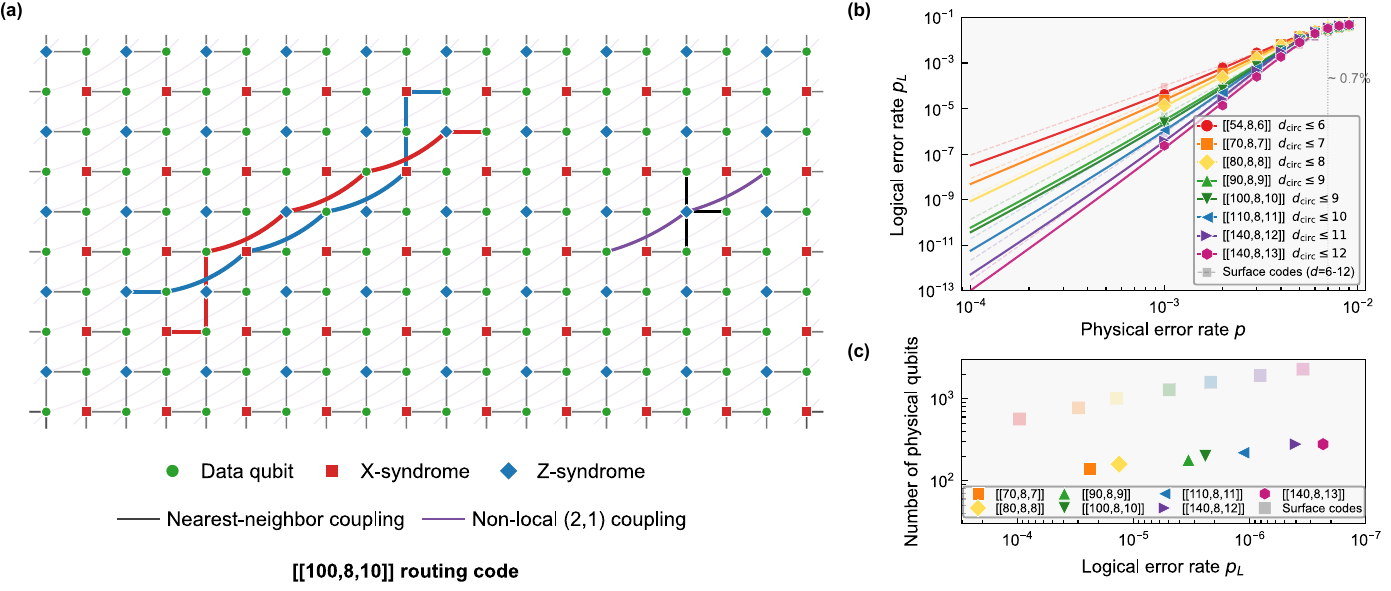}
    \caption{%
        \textbf{(a)} Hardware connectivity of the $[[100,8,10]]$ routing code on a $20\times10$ torus.
        Green circles, red squares, and blue diamonds represent the initial position of data, $X$-, and $Z$-syndrome qubits, respectively. Gray lines show nearest-neighbor couplings and faint extensions at the edges indicate the periodic boundary conditions.
        Purple curves represent the internal $(2,1)$ non-local couplings, displayed with high transparency for visual clarity. The figure also shows the routing paths of the X and Z syndrome qubits, which exhibit time-reversal symmetry. Additionally, the connection pattern of a representative qubit is highlighted on the right.
        \textbf{(b)} Average Logical error rate per round per logical qubit as a function of the physical error rate $p$. Solid lines and colored markers represent routing codes under circuit-level depolarizing noise and dashed lines and square markers represent rotated surface codes ($d$ from 6 to 13) under the same noise model. For surface codes, the color matches the routing code of the corresponding circuit-level distance, but with higher transparency for visual distinction. Curves are fits to $p_L = p^{d_{\mathrm{circ}}/2} \exp(c_0 + c_1 p + c_2 p^2)$, where $d_{\mathrm{circ}}$ is the estimated upper bound on the circuit-level distance.
        \textbf{(c)} Physical qubit overhead versus logical error rate at physical error $p = 10^{-3}$.
        Colored symbols represent routing codes and squares with lower opacity show the surface codes at the matching distance. The number of physical qubits for surface codes is scaled to $8 \times (2d^2-1)$ to encode eight logical qubits, matching the routing codes.
    }
    \label{fig:combined}
\end{figure*}

We estimate an upper bound on the circuit-level distance $d_{\text{circ}}$ by the randomized sampling method of Ref.~\cite{bravyi2024high}. The estimated upper bounds for routing codes are higher than those for BB codes of the same theoretical distance, suggesting that routing codes may scale more favorably to lower logical error rates. While this does not constitute a rigorous proof of a larger circuit-level distance, no exceptions to this trend were observed across over $10^6$ trials.

 We then evaluate the performance of weight-7 routing codes under circuit-level noise. We simulate these codes in Table~\ref{tab:selected_codes} with physical error rates $p$ from $10^{-2}$ to $10^{-3}$, using the circuit-level depolarizing noise model (see Appendix~\ref{simulation}). Under the same estimated circuit-level distance, routing codes achieve a logical error rate lower than that of the rotated surface code. As shown in Fig.~\ref{fig:combined}b, the routing codes exhibit a threshold of approximately 0.7\%, comparable to that of surface codes and BB codes. By extrapolating the logical error rate curve to $p = 10^{-4}$, it is observed that the logical error rate of the distance-13 routing code is already lower than $10^{-13}$, using only 280 physical qubits to encode 8 logical qubits. For all codes compared here, we use the same BP-OSD decoder~\cite{Panteleev2021} to ensure a fair comparison at the decoder level. This choice lowers the performance of surface codes relative to values commonly reported with the minimum-weight perfect matching decoder~\cite{PhysRevA.83.020302,Higgott2022}, but it ensures a fair comparison, since the performance of routing codes can also be further improved with more accurate decoders, such as relay-BP~\cite{müller2025improvedbeliefpropagationsufficient} or neural-network-based decoders~\cite{Bausch2024,senior2026scalablerealtimeneuraldecoder}.

Based on circuit-level simulations, We estimate the physical qubit overhead reduction relative to the rotated surface code. Using surface code data at distances $6$ to $12$ and $p = 10^{-3}$, we identify, for each routing code, the surface code distance whose logical error rate most closely matches that of the routing code. As shown in Fig.~\ref{fig:combined}c, routing codes reduce the physical qubit overhead by around a factor of 8 compared to the surface code to reach a close logical error rate.

Further, by extending the length of routing sequences, one can search for instances whose encoding rates significantly surpass those of BB codes. Fig.~\ref{fig2}a displays code instances with weights ranging from 8 to 11, along with their encoding rate multiples $kd^2/n$ relative to the surface code. Note that we have specifically checked the upper bound on the circuit-level distance for these codes, which is at most one less than the theoretical distance, implying that the actual circuit-level performance will not deteriorate sharply. When the weight is increased to 11, the encoding rate of routing codes [[200, 24, 14]] and [[200, 16, 17]] can reach over 23 times that of the surface codes with the same distances. Compared to the purely nearest-neighbor directional tile codes in Ref.~\cite{gu2026nearestneighbourgatesneedhighrate}, weight-7 routing codes already achieve encoding rates comparable to their weight-11 codes at the same distance. When the weight is increased to 11, the $kd^2/n$ parameter of the [[200, 24, 14]] code reaches over 2.4 times that of weight-11 directional tile codes. However, since the stabilizer weight directly determines the depth of the error-correction circuit and is thereby closely tied to the logical error rates, the gain in encoding rate comes at the cost of increasing logical error rates. The logical error rate tests for routing codes with distances from 6 to 13 at a physical error rate $p=0.001$ in Fig.~\ref{fig2}b corroborate this analysis. As the weight increases, the ratio of the logical error rate of these codes to that of the surface code with the same upper bound on the circuit-level distance steadily rises. This reflects a trade-off among the encoding rate, the stabilizer weight, and the logical error rate. 

\begin{figure}[htpb]
    \centering
    \includegraphics[width=8.5cm]{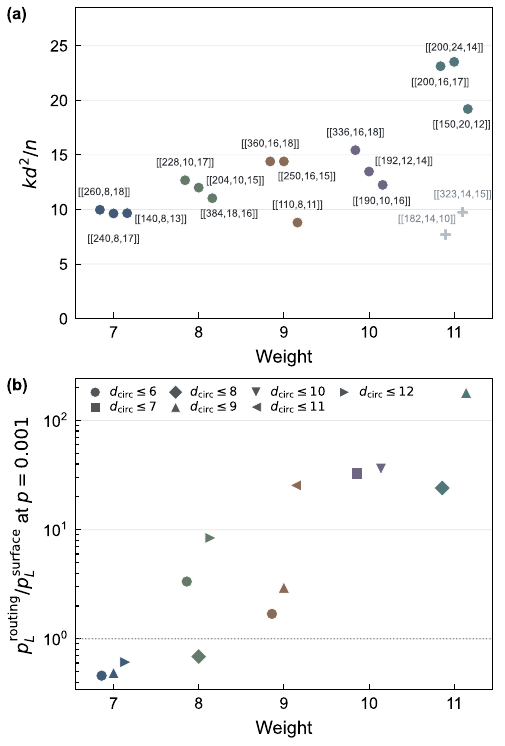}
\caption{{Encoding efficiency and logical performance in routing codes with different weights.} \textbf{(a)} Encoding-efficiency metric \(kd^{2}/n\) for representative routing codes with stabilizer weights \(w=7\)--\(11\). For each weight, three routing codes are shown and labeled by their parameters \([[n,k,d]]\). The light-colored crosses denote the weight-11 directional tile codes reported in Ref.~\cite{gu2026nearestneighbourgatesneedhighrate}. \textbf{(b)} Ratio of the logical error rate of each routing code to that of a surface code with the same upper bound on the circuit-level distance, evaluated at a physical error rate \(p=10^{-3}\). Colors indicate the stabilizer weight, while marker shapes distinguish the circuit-level distance \(d_{\mathrm{circ}}\). The horizontal dotted line marks equal logical error rates, \(p_{L}^{\mathrm{routing}}/p_{L}^{\mathrm{surface}}=1\). Complete information on all codes shown in the figure can be found in Appendix~\ref{sec:more_instances}.}
    \label{fig2}
\end{figure}

\section{Hardware implementation advantages}
\label{sec: hardware}

The structural properties of routing codes translate into concrete and, in several respects, decisive advantages for physical implementation on superconducting and neutral-atom platforms. 

In superconducting platforms, the reduced qubit connectivity (4 or 5) of routing codes directly alleviates frequency crowding and the associated crosstalk. This lower connectivity permits higher couplings, reduces on-chip structures and control lines, and simplifies frequency assignment to lower the risk of parasitic resonances ~\cite{PhysRevApplied.10.054062, kosen2024signal}.

Beyond connectivity, the parallel geometry of the non-local couplings offers a deeper architectural advantage. To realize the non-local couplings required by qLDPC codes, superconducting platforms are increasingly considering multi-layer architectures. In a multi-chip stackup, additional routing chips can be bonded above the qubit chip via TSVs, with each additional chip providing two opposing routing layers~\cite{yost2020solid,rosenberg20173d,field2024modular}. High-fidelity two-qubit gates mediated by multi-bump couplers have already been demonstrated~\cite{field2024modular,norris2025performance}, and long-range on-chip couplers are under active development~\cite{wang2025demonstration,kumph2024demonstration,marxer2023long,Zhang2025,li2025fastrobustremotetwoqubit}. In this setting, because all non-local edges in a routing code share the same direction, they generate far fewer crossings that must be resolved by detours or by promoting edges to higher tiers. This directly reduces the number of routing tiers required and shortens the average coupler length, which in turn lowers the parasitic capacitance and fabrication complexity associated with additional TSVs and bump bonds.

To quantify these advantages, we evaluate the hardware resource overhead using the Hardware-Aware Layout (HAL) framework~\cite{Mathews2026}, which automates the place-and-route process for multi-layer superconducting architectures. HAL takes a code's connectivity graph as input and outputs the required number of routing tiers, the average coupler length, the average number of bump bonds per edge, the average number of TSVs per edge, and a composite hardware complexity score. We apply HAL to all routing codes from Table~\ref{tab:selected_codes} and to the corresponding BB codes for comparison. The results are summarized in Table~\ref{tab:hal_comparison}. Note that weight-7 routing codes as representative examples is sufficient, since higher-weight routing codes do not introduce additional hardware connectivity. Across all metrics, routing codes require substantially fewer hardware resources. The number of routing tiers is consistently lower (3--4 versus 5--6 for BB codes), and the average coupler length, average bump bonds per edge, and average TSVs per edge are all reduced. The composite hardware complexity score, which aggregates these metrics, remains below 2.01 for all routing codes, reaching as low as 1.57, whereas all BB codes exceed 2. It should be noted that although all non-local couplings in routing codes are parallel, this parallelism holds on the torus topology. The boundary connections that arise when mapping the torus to a plane pose the greatest limitation to realizing the layout within only two layers.

\begin{table}[t]
\centering
\renewcommand{\arraystretch}{1.15}
\small
\begin{tabular}{c c c c c c}
\toprule
\,$[[n,k,d]]$ \,&\, Tiers \,&\, Length \,& \, Bumps \,& TSVs & \,$C_\text{hw}$ \,\\
\midrule
\multicolumn{6}{c}{\textbf{Routing codes} } \\
$[[54,8,6]]$   & 3 & 4.63  & 2.73 & 2.06 & 1.57 \\
$[[70,8,7]]$   & 3 & 6.22  & 3.14 & 2.07 & 1.64 \\
$[[80,8,8]]$   & 4 & 8.62  & 3.38 & 2.34 & 1.81 \\
$[[90,8,9]]$   & 3 & 7.43  & 3.41 & 2.25 & 1.70 \\
$[[100,8,10]]$ & 4 & 11.43 & 2.93 & 2.25 & 1.85 \\
$[[110,8,11]]$ & 4 & 10.63 & 4.71 & 2.58 & 1.96 \\
$[[140,8,12]]$ & 3 & 7.43  & 3.07 & 2.19 & 1.68 \\
$[[140,8,13]]$ & 5 & 10.32  & 4.32 & 2.72 & 2.01 \\
\midrule
\multicolumn{6}{c}{\textbf{BB codes}~\cite{bravyi2024high}} \\
$[[72,12,6]]$   & 5 & 8.94  & 4.67 & 2.94 & 2.01 \\
$[[108,8,10]]$  & 5 & 13.54 & 4.20 & 2.73 & 2.09 \\
$[[144,12,12]]$ & 6 & 14.58 & 4.65 & 3.17 & 2.24 \\
\bottomrule
\end{tabular}
\caption{Hardware resource evaluation using HAL~\cite{Mathews2026}. For each code we report the number of tiers, the average coupler length, the average number of bump bonds per edge, the average number of TSVs per edge, and the composite hardware complexity score $C_\text{hw}$. All results are obtained under the default parameter settings of HAL, taking the best case among the available layout options. All routing codes achieve lower complexity scores than the BB codes of comparable distance.}
\label{tab:hal_comparison}
\end{table}

In neutral-atom platforms, We consider the zoned neutral-atom architecture, where qubits are stored in a static optical lattice and selectively transported to an entangling region by moving optical tweezers controlled by orthogonal AODs~\cite{bluvstein2024logical,Bluvstein2025}. The shuttling process introduces two primary error sources, motional heating and atom loss, both of which grow with transport time and distance.

The BB codes require coupling vectors such as $(6,3)$, with a geometric distance of $d_{(6,3)}\approx 6.71$. Routing codes restrict all non-local couplings to vectors as short as $(2,1)$, yielding $d_{(2,1)}\approx 2.24$. This $\sim\!3\times$ reduction in shuttle distance translates directly to a proportionally shorter transport time. Since both motional heating and atom loss depend on the transport duration~\cite{Bluvstein2022}, the shorter distance significantly reduces the contribution of these error sources.

The parallel geometry of the long-range edges offers an additional scheduling advantage. Since AOD rows and columns cannot cross during transport~\cite{Stade2024}, multi-directional vectors in BB codes cause conflicts that force serialization of long-range gates. Because of the torus topology, routing codes require two time steps to execute one layer of long-range gates without AOD conflicts, while BB codes require up to four and typically two. The combination of shorter shuttle distances and simplified scheduling collectively reduces the QEC cycle time and operational error rate on neutral-atom hardware.

\section{Discussion}
\label{sec: discussion}
We have introduced routing codes, a new family of qLDPC codes constructed by iSWAP-based qubit routing, which simultaneously deliver high rates, high threshold, low qubit connectivity, short and parallel non-local coupling. Combined with encoding rates competitive with BB codes and reduced qubit connectivity, routing codes achieve a level of hardware simplicity that existing qLDPC constructions, even with circuit-level optimizations, cannot reach.

We emphasize that the explicit codes reported here represent only a small fraction of the full routing code landscape. Several promising directions remain open. First, the construction can be extended beyond the standard torus topology to twisted tori, where twisted boundary conditions may further improve the encoding rate~\cite{rmy6-9n89,https://doi.org/10.48550/arxiv.2507.19430}. Second, relaxing the restriction to a single non-local coupling vector and allowing multiple non-local vectors within a sequence could yield new structures with even better parameters.

As we have already pointed out, constructing planar routing codes is a key step toward further reducing hardware complexity, as it would enable completely conflict-free and shortest wiring within a two-layer architecture. We anticipate that routing codes can be planarized using methods similar to those in Ref.~\cite{gu2026nearestneighbourgatesneedhighrate} within the framework of tile codes. Such routing codes, with their modestly enriched connectivity structure, may offer higher encoding rates than directional tile codes.

More generally, the routing code concept is not tied to translation-invariant codes. It can be defined on any bipartite graph, with each syndrome qubit following its own individualized routing path. This also opens the possibility of hardware-adaptive quantum error correction. On a chip with defective qubits, where the connectivity graph loses its translational invariance, geometrically regular codes such as the surface code or BB codes become difficult to implement directly~\cite{PhysRevApplied.19.064081,Siegel2023adaptivesurfacecode,wei2026adaptivedeformationcolorcode}. Routing codes, in contrast, can be tailored to the irregular connectivity by designing custom routing paths that avoid the defective regions, thereby enabling fault-tolerant encoding on imperfect hardware without additional defect-mitigation overhead.

Meanwhile, the realization of fault-tolerant logical operations on routing codes (and on dynamical codes in general) remains a open problem. The predominant technique of code surgery has so far been developed mainly for static codes~\cite{Cohen2022}. We anticipate that incorporating analogous dynamical code structures into the ancilla regions during surgery will be essential for maintaining consistency with the rest of the code.

\begin{acknowledgments}
This work was supported by the National Key Research and Development Program of China (Grant No. 2023YFB4502500), Anhui Province Innovation Plan for Science and Technology (Grant No. 202423r06050002), and the China Postdoctoral Science Foundation (Grant No. 2026M793680).
\end{acknowledgments}

\begin{appendix}

\section{Proof of propositions}
\textbf{Proof of Proposition 1}. Two stabilizers commute if and only if \(|Q_{\mathbf{x}} \cap Q_{\mathbf{z}}|\) is even. The intersection size \(|Q_{\mathbf{x}} \cap Q_{\mathbf{z}}|\) is simply the number of pairs \((t, s)\) satisfying the vector equation
\begin{equation}
(a_t, b_t) = (c_s, d_s) \pmod{(l,m)}.
\end{equation}
where $(a_t, b_t) \in Q_{\mathbf{x}}$ and $(c_t, d_t) \in Q_{\mathbf{z}}$. Commutativity requires the total number of such solutions are even.

Now, by translation invariance, the code contains all even–even translates of the Z stabilizer. Thus we must require that for every even–even vector \((2u, 2v)\), the shifted Z stabilizer \(Q_{\mathbf{z}} + (2u, 2v)\) also has even overlap with \(Q_{\mathbf{x}}\). The number of solutions to
\begin{equation}\label{eq:delta_diff_condition}
(a_t-c_s, b_t-d_s) = (2u, 2v) \pmod{(l,m)}
\end{equation}
must therefore be even for all \((2u, 2v)\).

This is exactly the information encoded in the group-ring product. Consider the product \(P(x,y) Q(x^{-1}, y^{-1})\). Its expansion is
\begin{equation}
P(x,y) Q(x^{-1}, y^{-1}) = \sum_{t,s} x^{a_t - c_s} y^{b_t - d_s}.
\end{equation}
For a fixed even–even vector \((2u, 2v)\), the coefficient of the term \(x^{2u} y^{2v}\) in this sum counts, modulo 2, exactly the number of pairs \((t,s)\) such that \((a_t - c_s, b_t - d_s) = (2u, 2v)\). Therefore, commutativity with all Z stabilizers holds if and only if, in the expansion of \(P(x,y) Q(x^{-1}, y^{-1})\), every term \(x^i y^j\) with both \(i\) and \(j\) even has an even coefficient. $\square$

\textbf{Proof of Proposition 2}. Under the condition $v_t = w_t$, the X and Z syndrome qubits follow analogous routing paths satifying $Q_{\mathbf{x}}-\mathbf{x} = Q_{\mathbf{z}}  - \mathbf{z}$, where $\mathbf{x}$ and $\mathbf{z}$ are initial positions of the X and Z syndrome qubits. By the definition of the routing permutation, one can explicitly compute these vectors. A straightforward induction shows that the $t$-th data qubit initially at $\mathbf{x}$ is routed to the syndrome qubit after $t$ steps, and its initial position is~\cite{https://doi.org/10.48550/arxiv.2507.19430}
\begin{equation}
\delta_t \coloneqq \mathbf{q}_{\mathbf{x},t} - \mathbf{x} = 2\sum_{\tau=1}^{t-1} v_\tau + v_t \pmod{(l,m)}.
\end{equation}
The same expression holds for $\delta_t =\mathbf{q}_{\mathbf{z},t} - \mathbf{z}$.

By Proposition~\ref{prop:commutativity}, the stabilizers commute if and only if for every even--even vector $(2u,2v)$ the number of ordered pairs $(t_1,t_2)$ such that
\begin{equation}\label{eq:delta_diff}
(\delta_{t_1} + \mathbf{x}) - (\delta_{t_2}+\mathbf{z}) = (2u,2v) \pmod{(l,m)}
\end{equation}
is even.

Now impose the time-reversal symmetry $v_t = v_{T-t+1}$. We claim that for any solution $(t_1,t_2)$ to \eqref{eq:delta_diff}, the reversed pair $(\bar{t}_1,\bar{t}_2) = (T-t_1+1,\, T-t_2+1)$ is also a solution. Note that these two sets of solutions must be different, since the parity condition forbids \(t_1 = t_2 = (T+1)/2\). Therefore the solutions pair up and their total number is even.

To verify the claim, substitute the symmetry condition $v_\tau = v_{T-\tau+1}$ into the definition of $\delta_t$. A direct computation yields the relation
\begin{equation}
\delta_t + \delta_{T-t+1} = 2S,
\end{equation}
where $S = \sum_{\tau=1}^{T} v_\tau$. Consequently, for any $t_1, t_2$,
\begin{equation}
\delta_{T-t_1+1} - \delta_{T-t_2+1} = -(\delta_{t_1} - \delta_{t_2}),
\end{equation}
from which it follows immediately that $(T-t_1+1, T-t_2+1)$ satisfies equation~\eqref{eq:delta_diff} whenever $(t_1, t_2)$ does. Thus, under the given symmetry, the commutativity condition is automatically fulfilled. $\square$

\textbf{Proof of Proposition 3}. Since the X and Z syndrome qubits follow identical routing sequences, the ordered sets of data qubits they collect,
\(Q_X = \{\mathbf{q}_{\mathbf{x},1},\dots,\mathbf{q}_{\mathbf{x},T}\}\) and
\(Q_Z = \{\mathbf{q}_{\mathbf{z},1},\dots,\mathbf{q}_{\mathbf{z},T}\}\),
are related by an overall odd--odd translation \(\Delta = \mathbf{x} - \mathbf{z}\). That is, \(\mathbf{q}_{\mathbf{z},t} = \mathbf{q}_{\mathbf{x},t} + \Delta\) for all \(t\).

The commutativity condition concerns the intersection of \(Q_X\) and a translate of \(Q_Z\) by an even--even vector \((2u,2v)\). Each element in the intersection corresponds to a pair \((t,s)\) such that
\begin{equation}~\label{eq:delta_diff_condition2}
\mathbf{q}_{\mathbf{x},t} - \mathbf{q}_{\mathbf{z},s} = (2u,2v),
\end{equation}
which is equivalent to
\begin{equation}
\mathbf{q}_{\mathbf{x},t} - \mathbf{q}_{\mathbf{x},s} = (2u,2v) + \Delta.
\end{equation}
We must show that among all such pairs, those with \(t<s\) are even in number.

Because the components of each \(v_t\) are non-negative and the torus is large enough to prevent wrapping, the mapping \(t \mapsto \mathbf{q}_{\mathbf{x}_0,t}\) is strictly increasing in both coordinates. Consequently, for any two distinct indices \(t \neq s\),
\begin{equation}
\mathbf{q}_{\mathbf{x},t} - \mathbf{q}_{\mathbf{x},s} \text{ has  positive coordinate} \iff t > s.
\end{equation}
Now let \(\mathbf{d} = (2u,2v) + \Delta\) be the fixed vector in Eq.~\eqref{eq:delta_diff_condition2}. Note that \(t\neq s\), since \((2u,2v) + \Delta \neq (0,0)\). Therefore, for every solution \((t,s)\) we have either \(t>s\) (if some coordinate of \(\mathbf{d}\) are positive) or \(t<s\) (if some coordinate of \(\mathbf{d}\) are negative). By Proposition~\ref{prop:commutativity}, the total number of solutions is even. Restricting to \(t<s\) simply selects all solutions when \(\mathbf{d}\) has negative coordinates, and none when \(\mathbf{d}\) has positive coordinates. In either case the count is even. $\square$

\section{Qubit routing with iSWAP gates}

In prior works, qubit routing is utilized to efficiently overcome hardware connectivity constraints~\cite{c31l-qjv2,Zhou2026}. The central building block of qubit routing is the iSWAP gate, which is equivalent to a CNOT gate followed by a SWAP gate. This operation commonly referred to as CXSWAP.  Formally, the two-qubit gate
\begin{equation}
\begin{aligned}
\text{CXSWAP} & = \text{SWAP} \times \text{CNOT} \\ 
& =(H \otimes I) (S^\dagger \otimes S^\dagger) \, \text{iSWAP} \, (I \otimes H),
\end{aligned}
\end{equation}
shares the equivalent KAK decomposition~\cite{khaneja2000cartandecompositionsu2nconstructive,tucci2005introductioncartanskakdecomposition} as the iSWAP gate. Consequently, any quantum circuit written with CXSWAP gates can be locally transformed into a circuit of iSWAP gates and vice versa, without changing the overall two-qubit gate count.  In our syndrome-extraction circuits we exploit this equivalence. Each physical iSWAP simultaneously performs a parity check between the data qubit and the syndrome qubit and swaps their positions. The sequential application of such gates moves the syndrome qubit along the prescribed routing path, collecting parity information from every visited data qubit.

\section{Estimating code distances}

\subsection{Theoretical code distances}

Obviously, the routing code proposed belongs to the quantum CSS codes~\cite{nielsen2010quantum}. The number of logical qubits encoded by a CSS code is given by
\begin{equation}
k = n - \operatorname{rank}(H_X) - \operatorname{rank}(H_Z),
\end{equation}
where $H_X$ and $H_Z$ are the parity-check matrices of the $X$ and $Z$ stabilizers, respectively, and $n$ is the total number of physical qubits.  Computing the distance $d$ of a quantum LDPC code is, in general, NP-hard.  Here we formulate the distance calculation as an integer linear programming problem.

Let $H_Z$ be the $n_Z \times n$ parity-check matrix of the $Z$ stabilizers.  For a given logical $Z$ operator, represented as an $n$-dimensional binary vector $\bar{Z}$, we seek the minimum-weight $X$ operator, encoded as a binary vector $x \in \{0,1\}^n$, that commutes with all $Z$ stabilizers while anti-commuting with $\bar{Z}$.  The commutation condition is
\begin{equation}
H_Z \, x = 0 \pmod{2},
\end{equation}
which is equivalent to demanding the existence of an auxiliary integer vector $y \in \mathbb{Z}^{n_Z}$ such that
\begin{equation}
H_Z \, x + 2y = 0.
\end{equation}
The anti-commutation condition with the logical operator reads
\begin{equation}
\bar{Z} \cdot x = 1 \pmod{2},
\end{equation}
which is enforced by requiring an integer variable $z$ satisfying
\begin{equation}
\bar{Z} \cdot x + 2z = 1.
\end{equation}

The integer programming problem therefore reads
\begin{equation}
\begin{aligned}
\text{minimize} \quad & \sum_{i=1}^{n} x_i \\
\text{subject to} \quad & H_Z \, x + 2y = 0, \\
& \bar{Z} \cdot x + 2z = 1, \\
& x_i \in \{0,1\}, \quad y \in \mathbb{Z}^{n_Z}, \quad z \in \mathbb{Z}.
\end{aligned}
\end{equation}

The optimal objective value is the weight of the lightest $X$ operator that satisfies the conditions for the chosen $\bar{Z}$, i.e., the $X$-distance contribution of that logical operator.  Repeating the procedure for all $k$ independent logical $Z$ operators and taking the minimum yields the $X$-distance $d_X$ of the code.  By exchanging the roles of $X$ and $Z$ (substituting $H_X$ and a logical $X$ operator $\bar{X}$) we analogously obtain $d_Z$.  The overall code distance is then $d = \min(d_X, d_Z)$.

\subsection{Circuit-Level Distance}
\label{sec:circuit_distance}
In circuit-level simulations, errors can propagate through entangling gates and measurements, making the effective code distance dependent on the specific syndrome-extraction circuit. The circuit-level distance \(d_{\mathrm{circ}}\) is defined as the minimum number of faulty operations in the syndrome measurement circuit that can cause an undetectable logical error. To estimate the circuit-level distance $d_{\text{circ}}$ of the constructed syndrome measurement circuit, we adopt the BP-OSD based upper bound method proposed in Ref.~\cite{bravyi2024high}. The circuit-level distance $d_{\text{circ}}^Z$ for $Z$-type errors is defined as the minimum number of faulty operations in the syndrome measurement circuit that can generate an undetectable $Z$-type logical error; $d_{\text{circ}}^X$ is defined analogously, and $d_{\text{circ}} = \min(d_{\text{circ}}^X, d_{\text{circ}}^Z)$.

We first construct the circuit-level decoding matrix $D$ using a linearized noise model. For each possible single-fault location in the circuit, including CNOT gates (15 two-qubit Pauli errors), qubit initializations and measurements (bit-flip errors), and idle qubits ($X$, $Y$, or $Z$ errors)—we simulate the fault by propagating the resulting Pauli error through the remaining circuit operations using the stabilizer formalism. This yields, for each fault $j$, the measured syndrome $\mathbf{s}_j^U$, the final data-qubit error syndrome $\mathbf{s}_j^F$, and the logical syndrome $\mathbf{s}_j^L$. The decoding matrix $D$ is formed by stacking $[\mathbf{s}_j^U; \mathbf{s}_j^F]$ as columns, the logical matrix $D^L$ by stacking $\mathbf{s}_j^L$ as columns, and columns with identical syndrome triples are merged with probabilities summed accordingly. We convert the measured syndrome components to sparse form via temporal difference encoding, which reduces the number of nonzero entries per column. For CSS codes, we can construct separate decoding matrices $D_X$ and $D_Z$ for $X$-type and $Z$-type errors respectively.

The circuit-level distance computation is then formulated as the following optimization problem. For $Z$-type errors:
\begin{equation}
\label{eq:dcirc}
d_{\text{circ}}^Z(\boldsymbol{\eta}) = \min_{\boldsymbol{\xi} \in \ker D_X} \sum_{j=1}^{M} \xi_j \quad \text{subject to} \quad \boldsymbol{\eta}^T \boldsymbol{\xi} = 1,
\end{equation}
where $\boldsymbol{\eta}$ is a vector in the span of rows of $D_X$ and rows of $D^L$ corresponding to logical $X$-type operators, but not in the row space of $D_X$ alone. This constraint ensures that $\boldsymbol{\eta}^T \boldsymbol{\xi} = 1$ is satisfiable for some $\boldsymbol{\xi} \in \ker D_X$ and selects logical errors. Since the exact decoding problem is NP-hard, we solve Eq.~\eqref{eq:dcirc} heuristically using the Belief Propagation with Ordered Statistics Decoding (BP-OSD) algorithm~\cite{Panteleev2021,bravyi2024high}. Specifically, we form the augmented parity check matrix $[D_X; \boldsymbol{\eta}^T]$ with syndrome $(0, \ldots, 0, 1)^T$, and the Hamming weight of the BP-OSD solution provides an upper bound $d_{\text{circ}}^{Z,\text{BP}}(\boldsymbol{\eta}) \geq d_{\text{circ}}^Z$.

A crucial property is that for a uniformly random choice of $\boldsymbol{\eta}$, the probability that $d_{\text{circ}}^{Z,\text{BP}}(\boldsymbol{\eta})$ equals the true $d_{\text{circ}}^Z$ is at least $1/2$ (whenever BP-OSD finds the optimal solution and $\boldsymbol{\eta}$ anticommutes with a minimum-weight logical operator). Therefore, by sampling $T$ independent random vectors $\boldsymbol{\eta}^1, \ldots, \boldsymbol{\eta}^T$ and taking the minimum, we obtain a systematically improvable upper bound:
\begin{equation}
d_{\text{circ}}^{Z,\text{BP}} := \min_{a=1,\ldots,T} d_{\text{circ}}^{Z,\text{BP}}(\boldsymbol{\eta}^a) \geq d_{\text{circ}}^Z.
\end{equation}
We compute $d_{\text{circ}}^{X,\text{BP}}$ analogously using $D_Z$ and logical $Z$-type rows of $D^L$, and report $d_{\text{circ}}^{\text{BP}} = \min(d_{\text{circ}}^{X,\text{BP}}, d_{\text{circ}}^{Z,\text{BP}})$ as the upper bound on the circuit-level distance.

\section{Numerical simulations of circuit-level performance}
\label{simulation}
\subsection{Circuit-level simulation and decoding}

All circuit-level simulations are performed using the open-source tool Stim~\cite{Gidney2021stimfaststabilizer}, a high-performance stabilizer-circuit simulator designed for quantum error correction.  We adopt the standard circuit-level depolarizing noise model.  Two-qubit depolarizing channels are applied after every two-qubit gate, and single-qubit depolarizing channels are applied after every idle operation. We ignored the single-qubit gate noise because it is typically very small compared to the two-qubit gate noise. Note that this assumption is also equivalent to compiling the circuit using CXSWAP gates, in which single-qubit gates do not appear. Concretely, single-qubit and two-qubit depolarizing channels are
\begin{equation}
\begin{aligned}
\mathcal{E}_{1}(\rho_1) &= (1-p)\rho_1 + \frac{p}{3}\sum_{P\in\{X,Y,Z\}} P \rho_1 P,\\
\mathcal{E}_{2}(\rho_2) &= (1-p)\rho_2 \\ &+ \frac{p}{15}
\sum_{\substack{P_1,P_2\in\{I,X,Y,Z\},\\P_1\otimes P_2\neq I \otimes I}} (P_1\otimes P_2)\,\rho_2\,(P_1\otimes P_2),
\end{aligned}
\end{equation}
where $\rho_1$ and $\rho_2$ denote single-qubit and two-qubit density matrices, respectively.  In addition, measurement readout and state initialization each suffer a bit-flip with probability $p$.

For decoding we employ the BP-OSD (Belief Propagation with Ordered Statistics Decoding) algorithm, using the open-source \textit{stimbposd}.  BP-OSD is a general-purpose decoder for quantum LDPC codes that performs well across a broad range of code families, although it is not guaranteed to be accurate for any particular code.  Crucially, we apply the identical decoder and identical decoding parameters to all codes compared in this work, including routing codes, BB codes, and rotated surface codes, ensuring a fair comparison.  The reported logical error rates should be regarded as a consistent lower bound on the achievable performance, not the ultimate limit. Both routing codes and BB codes can benefit from more accurate decoders. We note that all results are subject to improvement through the use of more specialized decoders in the future.

\subsection{Logical error rate and physical qubit overhead}

The logical error rate is estimated by simulating memory experiments, where half of the samples are initialized in the X basis to test the logical Z error rate, and the other half in the Z basis to test the logical X error rate. The logical error rate \( p_L \) is estimated as the average of the logical Z and logical X error rates. 

For a code with parameters $[[n,k,d]]$, let $P_{L,N}$ be the cumulative logical error rate accumulated over $N$ error-correction cycles.  The logical error rate per logical qubit per cycle is well approximated by
\begin{equation}
p_L = \frac{1}{k}\Bigl(1 - (1-P_{L,N})^{1/N}\Bigr) \approx \frac{P_{L,N}}{kN}.
\end{equation}
Throughout the numerical simulation in this work we fix $N = 10$ QEC cycles. For each data point in the logical error rate curve, we sampled at least 10,000 logical error events, so the standard deviation of \( p_L \) is no greater than \( 0.01 p_L \).

To quantify the practical advantage of routing codes, we compare their physical qubit overhead against that of the rotated surface code.  At the same physical error rate $p = 10^{-3}$, we simulate rotated surface codes with distances $d$ ranging from $3$ to $13$, each requiring $2d^{2}-1$ physical qubits per logical qubit.  For each routing code, we identify the  surface code distance $d^{\star}$ whose per-round logical error rate is closest to that of the routing code.  The ratio of physical qubit costs,
\begin{equation}
n_{\text{surface}}/n = \frac{2(d^{\star})^{2} - 1}{2 n/k},
\end{equation}
then directly measures the qubit savings offered by routing codes relative to the surface code.

\section{Hardware complexity analysis}
\label{sec:hardware_complexity}

\subsection{Hardware complexity on superconducting platforms}

We quantify the hardware resource overhead of a superconducting chip layout using the composite metric $C_{\text{hw}}$ introduced in Ref.~\cite{Mathews2026}. The metric aggregates four key quantities that dominate fabrication complexity in multi-layer architectures: the number of routing tiers ({Tier}), the average coupler length on higher tiers in units of the shortest qubit spacing ({Length}), the maximum average number of bump bonds per edge across all tiers ({Bump}), and the average number of through-silicon vias per edge on higher tiers ({TSV}).

Each quantity $q_i$ is linearly scaled between a baseline value $b_i$, representing state-of-the-art surface-code hardware, and an optimistic value $p_i$ reflecting near-future attainable technology. The scaled quantity is $c_i = (q_i - b_i)/(p_i - b_i)$, and the composite score is
\begin{equation}
C_{\mathrm{hw}} = 1 + \frac{1}{4} \sum_i c_i.
\end{equation}
By construction, the surface code yields $C_{\mathrm{hw}} = 1$ as a baseline, and a design achieving all four optimistic targets yields $C_{\mathrm{hw}} = 2$.  The optimistic values were set to 5 tiers, 10 times longer long-range couplers than short range couplers, 4 bump transitions, and 3 TSVs per coupler. In the main text, we report $C_{\mathrm{hw}}$ for each routing code and BB code as evaluated by the HAL framework under its default parameters, taking the best layout among the available placement options.

\subsection{Minimum scheduling depth for atom movement}

To determine the number of time steps required for executing one layer of long-range two-qubit gates on a neutral-atom platform, we solve a minimum movement depth problem on a toroidal grid. We consider an $l \times m$ grid where atoms at the X and Z syndrome qubit positions are displaced by their respective routing vectors to reach the neighboring data qubits. The trajectory of each movement follows the corresponding coupling vector. All movements are straight segments under periodic boundary conditions. The objective is to partition these movements into the fewest time steps such that trajectories within the same step do not intersect in a plane, including endpoint contacts.

The problem is reduced to coloring a conflict graph $G$ whose vertices are the atom movements. Two vertices are adjacent if their toroidal trajectories intersect, which is verified by testing planar segment intersections between one segment and the integer translates $(km,\ell n)$ of the other. The minimum number of time steps equals the chromatic number $\chi(G)$. A breadth-first search checks whether $G$ is bipartite. If $G$ has no edges, $\chi(G)=1$ (one time step suffices). If $G$ is bipartite and has at least one edge, then $\chi(G)=2$. If $G$ is not bipartite, a greedy coloring provides an upper bound, and an exact backtracking search finds the optimal $\chi(G)$.

Using this scheduling model, we evaluated the time step requirements for both routing codes and BB codes. For routing codes, any layer of two-qubit gates necessitates a partition into two time steps to resolve conflicts arising from connections that traverse the toroidal boundaries. In contrast, for BB codes, the concurrent execution of coupling layers corresponding to vectors $(6,3)$ and $(3,6)$ requires four time steps. It is important to note that within the standard 7-layer error correction circuit for BB codes, the parallel execution of $(6,3)$ and $(3,6)$ couplings is structurally unavoidable.

\section{More details of code instances}
\label{sec:more_instances}

In this section we provide additional details and examples of routing codes that complement the main text. Table~\ref{tab:S1} lists the routing vector sequences and torus dimensions that uniquely determine the codes presented in Table~\ref{tab:selected_codes} of the main text. All other code properties, including parity-check matrices and hardware connectivity, can be independently reconstructed from these parameters.

Tables~\ref{tab:addtional_codes} present information on additional weight-7 routing codes as well as routing codes with larger weights. These codes are carefully selected representatives from our search that lead in encoding rate. We also require that $d_{\mathrm{circ}}$ does not exhibit a significant degradation, which comes from hook errors. As the weight increases to 11, we find codes with very high encoding rates, significantly surpassing those of surface codes or BB codes.  In fact, if the circuit-level distance test is removed, we have found even more codes with higher encoding rates. However, we do not present them here, as they could be misleading regarding practical performance. 

We emphasize that the instances reported here represent only a fraction of the routing code landscape. Additional codes can be obtained by varying the search parameters. We also note an empirical observation. Among these good-performing routing codes, the vector sequences consistently satisfy \(\mathbf{v}_t = \mathbf{w}_{T-t+1}\), meaning the X and Z routing sequences are time-reversed versions of each other. In fact, we also constructed several other special symmetry conditions, but none yielded better results than time-reversal symmetry. While a complete theoretical understanding of why this particular symmetry is so prevalent among the good codes remains an open question, we report it here as a useful condition for future searches.

\begin{table*}[htpb]
\centering
\renewcommand{\arraystretch}{1.15}
\small
\begin{tabular}{c c c c}
\toprule
$[[n,k,d]]$ & $\qquad$ $l, m$  $\qquad$& $\qquad$ $\qquad$   $\textbf{Routing vector sequence}$ \(\mathbf{\{v_t\}} \) $\qquad$ $\qquad$ &$\qquad$ $\qquad$ $\textbf{Routing vector sequence}$ \(\mathbf{\{w_t\}} \)  $\qquad$ $\qquad$ \\
\midrule
$[[54,8,6]]$   & $18, 6$  & [(0, 1), (1, 0), (3, 0), (0, 1), (3, 0), (1, 0), (0, 1)] & [(0, 1), (1, 0), (3, 0), (0, 1), (3, 0), (1, 0), (0, 1)] \\
$[[70,8,7]]$   & $14, 10$ & [(1, 0), (2, 1), (0, 1), (0, 1), (0, 1), (2, 1), (1, 0)] & [(1, 0), (2, 1), (0, 1), (0, 1), (0, 1), (2, 1), (1, 0)] \\
$[[80,8,8]]$   & $16, 10$ & [(1, 0), (0, 1), (0, 1), (0, 1), (4, 1), (4, 1), (1, 0)] & [(1, 0), (4, 1), (4, 1), (0, 1), (0, 1), (0, 1), (1, 0)] \\
$[[90,8,9]]$   & $18, 10$ & [(1, 0), (0, 1), (0, 1), (0, 1), (2, 1), (2, 1), (1, 0)] & [(1, 0), (2, 1), (2, 1), (0, 1), (0, 1), (0, 1), (1, 0)] \\
$[[100,8,10]]$ & $20, 10$ & [(1, 0), (0, 1), (0, 1), (2, 1), (2, 1), (2, 1), (1, 0)] & [(1, 0), (2, 1), (2, 1), (2, 1), (0, 1), (0, 1), (1, 0)] \\
$[[110,8,11]]$ & $22, 10$ & [(1, 0), (0, 1), (0, 1), (6, 1), (6, 1), (0, 1), (1, 0)] & [(1, 0), (0, 1), (6, 1), (6, 1), (0, 1), (0, 1), (1, 0)] \\
$[[140,8,12]]$ & $28, 10$ & [(1, 0), (0, 1), (0, 1), (4, 1), (0, 1), (0, 1), (1, 0)] & [(1, 0), (0, 1), (0, 1), (4, 1), (0, 1), (0, 1), (1, 0)] \\
$[[140,8,13]]$ & $28, 10$ & [(1, 0), (6, 1), (6, 1), (0, 1), (0, 1), (0, 1), (1, 0)] & [(1, 0), (0, 1), (0, 1), (0, 1), (6, 1), (6, 1), (1, 0)] \\
\bottomrule
\end{tabular}
\caption{Generating parameters of the routing codes presented in Table~1 of the main text. For each code, the torus dimensions $l,m$ and the routing vector sequences $\{\mathbf{v}_t\}$ and $\{\mathbf{w}_t\}$ are listed. The second sequence is the time-reversed version of the first. These parameters uniquely determine all code properties.}
\label{tab:S1}
\end{table*}

\begin{table*}[htbp]
\centering
\renewcommand{\arraystretch}{1.15}
\small
\begin{tabular}{c c c c c c}
\toprule
\textbf{Code} & $[[n,k,d]]$ & $\,$ $d_{\text{circ}}$  $\,$&  \textbf{Encoding rate} & $\quad$  $l, m$  $\quad$ & $\textbf{Routing vector sequence}$ \(\mathbf{\{v_t\}} \)\\
\midrule
\multirow{2}{*}{\shortstack{Routing codes\\weight-7}} 
& $[[240,8,17]]$  & $\leq 17$ & $1 / 30.00$ & 16, 30 & [(1, 0), (2, 1), (2, 1), (0, 1), (0, 1), (2, 1), (1, 0)]\\
& $[[260,8,18]]$  & $\leq 17$ & $1 / 32.50$ & 52, 10 & [(1, 0), (4, 1), (4, 1), (0, 1), (0, 1), (0, 1), (1, 0)]\\
\midrule
\multirow{6}{*}{\shortstack{Routing codes\\weight-8}} 
& $[[48,10,6]]$  & $\leq 6$ & $1 / 4.80$ & 8, 12 & [(5, 2), (0, 1), (0, 1), (5, 2), (1, 0), (0, 1), (0, 1), (1, 0)]\\
& $[[128,18,8]]$  & $\leq 8$ & $1 / 7.11$ & 32, 8 & [(0, 1), (5, 0), (5, 0), (0, 1), (0, 1), (1, 0), (1, 0), (0, 1)]\\
& $[[192,16,12]]$  & $\leq 12$ & $1 / 12.00$ & 16, 24 & [(5, 2), (0, 1), (0, 1), (5, 2), (1, 0), (0, 1), (0, 1), (1, 0)]\\
& $[[204,10,15]]$  & $\leq 15$ & $1 / 20.40$ & 12, 34 & [(0, 1), (5, 2), (5, 2), (1, 0), (1, 0), (1, 0), (5, 2), (0, 1)]\\
& $[[384,18,16]]$  & $\leq 16$ & $1 / 21.33$ & 32, 24 & [(1, 0), (0, 1), (0, 1), (6, 1), (6, 1), (0, 1), (0, 1), (1, 0)]\\
& $[[228,10,17]]$  & $\leq 16$ & $1 / 22.80$ & 12, 38 & [(0, 1), (5, 2), (1, 0), (1, 0), (1, 0), (5, 2), (5, 2), (0, 1)]\\
\midrule
\multirow{6}{*}{\shortstack{Routing codes\\weight-9}} 
& $[[72,16,6]]$  & $\leq 6$ & $1 / 4.50$ & 12, 12 & [(3, 0), (0, 1), (1, 0), (0, 1), (0, 1), (0, 1), (1, 0), (0, 1), (3, 0)]\\
& $[[200,32,8]]$  & $\leq 8$ & $1 / 6.25$ & 20, 20 & [(1, 0), (3, 0), (0, 1), (0, 1), (1, 0), (0, 1), (0, 1), (3, 0), (1, 0)]\\
& $[[108,12,9]]$  & $\leq 9$ & $1 / 9.00$ & 12, 18 & [(5, 2), (0, 1), (1, 0), (0, 1), (0, 1), (0, 1), (1, 0), (0, 1), (5, 2)]\\
& $[[110,8,11]]$  & $\leq 11$ & $1 / 13.75$ & 10, 22 & [(3, 2), (0, 1), (0, 1), (1, 0), (1, 0), (1, 0), (0, 1), (0, 1), (3, 2)]\\
& $[[250,16,15]]$  & $\leq 14$ & $1 / 15.62$ & 50, 10 & [(1, 0), (1, 0), (6, 1), (0, 1), (1, 0), (0, 1), (6, 1), (1, 0), (1, 0)]\\
& $[[360,16,18]]$  & $\leq 18$ & $1 / 22.50$ & 36, 20 & [(5, 2), (0, 1), (1, 0), (0, 1), (0, 1), (0, 1), (1, 0), (0, 1), (5, 2)]\\
\midrule
\multirow{6}{*}{\shortstack{Routing codes\\weight-10}} 
& $[[98,26,5]]$  & $\leq 5$ & $1 / 3.77$ & 14, 14 & [(1, 0), (0, 1), (2, 1), (2, 1), (1, 0), (1, 0), (2, 1), (2, 1), (0, 1), (1, 0)]\\
& $[[70,10,7]]$  & $\leq 7$ & $1 / 7.00$ & 10, 14 & [(1, 0), (1, 0), (0, 1), (0, 1), (3, 2), (3, 2), (0, 1), (0, 1), (1, 0), (3, 2)]\\
& $[[288,30,10]]$  & $\leq 10$ & $1 / 9.60$ & 24, 24 & [(0, 1), (1, 0), (3, 2), (0, 1), (0, 1), (0, 1), (0, 1), (3, 2), (1, 0), (0, 1)]\\
& $[[192,12,14]]$  & $\leq 14$ & $1 / 16.00$ & 32, 12 & [(0, 1), (1, 0), (0, 1), (1, 0), (0, 1), (0, 1), (1, 0), (2, 1), (1, 0), (2, 1)]\\
& $[[190,10,16]]$  & $\leq 16$ & $1 / 19.00$ & 38, 10 & [(1, 0), (1, 0), (0, 1), (0, 1), (4, 3), (4, 3), (0, 1), (4, 3), (1, 0), (1, 0)]\\
& $[[336,16,18]]$  & $\leq 17$ & $1 / 21.00$ & 24, 28 & [(1, 0), (1, 0), (0, 1), (1, 0), (0, 1), (0, 1), (5, 2), (0, 1), (5, 2), (5, 2)]\\
\midrule
\multirow{6}{*}{\shortstack{Routing codes\\weight-11}} 
& $[[50,16,5]]$  & $\leq 5$ & $1 / 3.12$ & 10, 10 & [(1, 0), (0, 1), (1, 0), (3, 0), (0, 1), (0, 1), (0, 1), (3, 0), (1, 0), (0, 1), (1, 0)]\\
& $[[100,16,8]]$  & $\leq 8$ & $1 / 6.25$ & 10, 20 & [(3, 0), (0, 1), (1, 0), (0, 1), (0, 1), (3, 0), (0, 1), (0, 1), (1, 0), (0, 1), (3, 0)]\\
& $[[90,12,9]]$  & $\leq 9$ & $1 / 7.50$ & 18, 10 & [(5, 0), (0, 1), (1, 0), (1, 0), (0, 1), (0, 1), (0, 1), (1, 0), (5, 0), (0, 1), (5, 0)]\\
& $[[150,20,12]]$  & $\leq 11$ & $1 / 7.50$ & 30, 10 & [(1, 0), (0, 1), (1, 0), (1, 0), (6, 1), (6, 1), (6, 1), (1, 0), (1, 0), (0, 1), (1, 0)]\\
& $[[200,24,14]]$  & $\leq 14$ & $1 / 8.33$ & 20, 20 & [(1, 0), (0, 1), (1, 0), (1, 0), (6, 1), (6, 1), (6, 1), (1, 0), (1, 0), (0, 1), (1, 0)]\\
& $[[200,16,17]]$  & $\leq 17$ & $1 / 12.50$ & 40, 10 & [(1, 0), (0, 1), (1, 0), (1, 0), (6, 1), (6, 1), (6, 1), (1, 0), (1, 0), (0, 1), (1, 0)]\\
\bottomrule
\end{tabular}
\caption{Additional details on the constructed routing codes. The sequence $\{\mathbf{w}_t\}$ is omitted since it is the time-reversed version of $\{\mathbf{v}_t\}$. As the weight increases, codes with significantly higher encoding rates are obtained, surpassing the performance of BB codes.}
\label{tab:addtional_codes}
\end{table*}

\end{appendix}

\clearpage

\bibliographystyle{apsrev4-2}
\bibliography{ref}

\end{document}